\documentclass[Journal]{IEEEtran}

\usepackage[tracking=true]{microtype}

\newcommand\narrowstyle{\SetTracking{encoding=*}{-50}\lsstyle}


\usepackage{algorithmicx}
\usepackage{algorithm}
\usepackage{algpseudocode}

\usepackage{epsfig}
\usepackage{multirow}

\usepackage{amsthm}
\usepackage{amssymb}
\usepackage{amsmath}
\usepackage{bm}
\usepackage{cite}
\usepackage{breqn}
\usepackage{float}
\usepackage{mathrsfs}
\usepackage{mathtools}
\usepackage{amsfonts}
\usepackage{graphicx}
\usepackage{algorithm, algpseudocode}
\usepackage{subfigure}

\usepackage[usenames]{color}
\usepackage{xcolor}
\usepackage{amsfonts}
\usepackage{relsize}
\usepackage{lmodern}
\usepackage{slantsc}
\usepackage[mathscr]{eucal}
\usepackage{enumerate}
\usepackage[final]{pdfpages}
\usepackage{stfloats}
\usepackage{blindtext}

\usepackage{caption}
\usepackage{balance}
\usepackage{flushend}
\usepackage{epsfig}

\usepackage{multirow}

\usepackage{amsthm}
\usepackage{amssymb}
\usepackage{amsmath}
\usepackage{bm}

\usepackage{cite}

\usepackage{breqn}
\usepackage{float}
\usepackage{mathrsfs}
\usepackage{mathtools}
\usepackage{amsfonts}
\usepackage{graphicx}

\usepackage{algorithm, algpseudocode}

\usepackage{subfigure}
\usepackage[usenames]{color}
\usepackage{xcolor}
\usepackage{amsfonts}
\usepackage{relsize}
\usepackage{lmodern}
\usepackage{slantsc}
\usepackage[mathscr]{eucal}

\usepackage{dsfont}
\usepackage{bbm}
\usepackage{bm}

\usepackage{caption}
\frenchspacing
\usepackage{tabularx, booktabs}
\usepackage{adjustbox}

\usepackage{xcolor}

\usepackage{colortbl}
\usepackage{lipsum}
\usepackage{scalerel}
\usepackage{stackengine}
\usepackage{tabularx,ragged2e}
\usepackage{booktabs}

\usepackage[utf8]{inputenc}
\usepackage{array}
\usepackage{makecell}

\usepackage{enumitem}
\usepackage{makecell}
\usepackage{multicol}
\usepackage{scalerel}

\usepackage[font={small}]{caption}

\usepackage{accents}

\DeclareMathAlphabet{\mathpzc}{OT1}{pzc}{m}{it}
\DeclarePairedDelimiterX{\norm}[1]{\lVert}{\rVert}{#1}

\newtheorem{assumption}{Assumption}
\newtheorem{thm}{Theorem}

\algnewcommand\Input{\item[\hspace{6pt}\textbf{Input:}]}
\algnewcommand\Output{\item[\hspace{6pt}\textbf{Output:}]}
\algnewcommand\OutputVal{\textbf{output} }
\frenchspacing

\newcolumntype{L}[1]{>{\raggedright\arraybackslash}p{#1}}
\newcolumntype{C}[1]{>{\centering\arraybackslash}p{#1}}
\newcolumntype{R}[1]{>{\raggedleft\arraybackslash}p{#1}}

\hyphenation{op-tical net-works semi-conduc-tor}

\begin{document} 
	\title{ \narrowstyle Tracking Consensus of Networked Random Nonlinear Multi-agent Systems with Intermittent Communications}   
	\author{Ali Azarbahram
		\thanks{Ali Azarbahram is with the Department of Mechanical and Process Engineering, Rheinland-Pfälzische Technische Universität
			Kaiserslautern-Landau, 67663 Kaiserslautern, Germany. Current contact via email address: ali.azarbahram@mv.uni-kl.de.} 	
	}

	
	\pagestyle{empty}   
	\maketitle
	\thispagestyle{empty}  
	\begin{abstract} 
		The paper proposes an intermittent communication mechanism for the tracking consensus of high-order nonlinear multi-agent systems (MASs) surrounded by random disturbances. Each collaborating agent is described by a class of high-order nonlinear uncertain strict-feedback dynamics which is disturbed by a wide stationary process representing the external noise. The resiliency level of this networked control system (NCS) to the failures of physical devices or unreliability of communication channels is analyzed by introducing a linear auxiliary trajectory of the system. More precisely, the unreliability of communication channels sometimes makes an agent incapable of sensing the local information or receiving it from neighboring nodes. Therefore, an intermittent communication scheme is proposed among the follower agents as a consequence of employing the linear auxiliary dynamics. The closed-loop networked system signals are proved to be noise-to-state practically stable in probability (NSpS-P). It has been justified that each agent follows the trajectory of the corresponding local auxiliary virtual system practically in probability. The simulation experiments finally quantify the effectiveness of our proposed approach in terms of providing a resilient performance against unreliability of communication channels and reaching the tracking consensus.
	\end{abstract}
	\begin{IEEEkeywords}
		 networked control systems (NCSs), tracking consensus, intermittent communications over network, wide stationary process, multi-agent systems.
	\end{IEEEkeywords}
	
	\IEEEpeerreviewmaketitle
	\section{Introduction}
	\label{Sec. 1} 
 
	\IEEEPARstart{T}{he} cooperative control of multi-agent systems (MASs) has certainly been among the most referred topics in the field of systems engineering in the past decade \cite{olfati2007consensus,azarbahram2022event,shahvali2020bipartite,amini2016h_,shahvali2023adaptive}. The control of collaborative MASs is inevitably linked with the concept of networked control systems (NCSs) \cite{7993094,azarbahram2022secure,amini2021sampled,amini2019resilient}. For instance, when a networked group of interconnected agents performs a specific mission, the resiliency of network to measurement failures in transmission channels is considered among the essential concerns that should be taken into account. Moreover, the real-world MASs are usually surrounded by environmental disturbances and the nature of these external forces is in fact random. The irrefutable presence of environmental disturbances has highlighted the significance of investigating and analyzing the stability of nonlinear MASs modeled by stochastic \cite{8709691,azarbahram2022eventvcvcvcvc,azarbahram2022platoon,8734885} or random \cite{WU2019314,azarbahram2022event} differential equations. With this general overview of the multi-agent NCSs challenges, now we review some of the most important technical considerations regarding the implementation of random networked MASs in more details.
\\
\indent
In real MAS applications, the induced disturbances by different factors certainly degrade the overall performance of system and in some cases cause instability and even system damage. In prior methods for controlling nonlinear MASs, time varying deterministic external disturbances are considered and compensated (see for instance, \cite{LI2021380}). However, environmental disturbances contain stochastic components. The stability analysis and control of nonlinear MASs modeled by stochastic differential equations including standard Wiener process is widely studied in recent years \cite{8709691, 8734885}. In theoretical approaches, white noise originated possess are favorable due to their convenience in analysis. However, such a process fails to fully model many realistic situations. Besides, a standard Wiener process is not strictly differentiable. Accordingly, colored noise processes have attracted much attention in recent years \cite{WU2019314, YAO2020108994, 9428607, wang2021distributed, bao2021containment}. Although the cooperative control of nonlinear MASs subject to colored noise disturbances is recently addressed in \cite{wang2021distributed} and \cite{bao2021containment}, agents sometimes cannot well sense the local information or receive it from neighboring nodes due to the failure of physical devices or unreliability of communication channels in real NCS implementations. Therefore, it is reasonable to assume that each agent can make the measurements only intermittently and the control system is supposed to compensate this effect \cite{li2007stabilization}. More precisely, investigating the cooperative performance of MASs subject to the wide stationary processes (which stands for the colored noise) is more favorable to be addressed while other NCSs limitations have also been taken into account.
\\
\indent
The intermittent control approach is known as a discontinuous mechanism that is applied periodically on the system \cite{guo2018distributed, li2020neural}. This control method has been widely studied in recent years and showed significant superiority in terms of increasing the flexibility and being cost-efficient compared to traditional continuous control schemes \cite{li2007stabilization}. The intermittent control has different practical applications in various real-world scenarios including a single system \cite{xavier2020practical, wang2021time, yang2021non, liu2020stabilization} and collaborative MASs \cite{guo2018distributed, li2020neural}. The state-dependent intermittent control approaches are proposed in \cite{xavier2020practical,wang2021time} for general classes of linear networked system. Howbeit, the nature of almost all real systems are intrinsically nonlinear. The intermittent control schemes are designed for single nonlinear networked systems in \cite{yang2021non, liu2020stabilization}. Subsequently, the collaborative performance of nonlinear MASs is recently studied in \cite{guo2018distributed, li2020neural} by introducing the intermittent mechanisms. However, the proposed approaches in \cite{guo2018distributed, li2020neural} are just valid for the networks of general second-order and first-order nonlinear MASs, respectively. Additionally, the environmental disturbances are not taken into account in the nonlinear dynamics of studied MASs in \cite{guo2018distributed, li2020neural}. 
\\
\indent
By the above discussion, this paper proposes an intermittent communication approach for the tracking consensus of uncertain high-order nonlinear MASs subject to environmental loads described by wide stationary process. According to the authors' best knowledge, this is the first instance that an intermittent data transmission mechanism is introduced for a class of strict-feedback MASs surrounded by random disturbance. The main contributions of this work are clarified in what follows. \textbf{(\textit{i})} Compared to the cooperative control designs in \cite{wang2021distributed} and \cite{bao2021containment} for nonlinear MASs subject to environmental loads that is described wide stationary process, the tracking consensus method in this paper is resilient to the failure of physical devices or unreliability of communication channels to better address the concerns of real NCS implementations. \textbf{(\textit{ii})} In comparison to the studied intermittent control approaches in \cite{guo2018distributed} and \cite{li2020neural}, the agents dynamics are described in this paper by uncertain high-order strict-feedback nonlinear random differential equations including wide stationary process to cover more real scenarios.  
\\
\indent
The remaining paper is organized as follows. The preliminaries and problem formulation is first given in section II. By modeling the intermittent communications over the network and stating that the objectives should be realized in two steps, the control architecture is constructed to meet the desired performance and then the closed-loop system stability is analyzed. In section IV, the effectiveness of the proposed method using a numerical example is investigated. The last section finally concludes the paper.

\section{Preliminaries and Problem Formulation}
\label{Sec. 2}

Throughout this paper, vectors and matrices are denoted in the bold font and scalars by normal font. For an undirected graph of $N$ followers, ${\bm{\mathcal{A}}}:=[a_{i,j}] \in \mathbb{R}^{N \times N}$ is the follower adjacency (or connectivity) matrix with element $a_{i,j}$. $ {N}_i $ is the neighbouring set for node $ i $. If follower $ i $ can directly obtain the information from the leader, then $ b_i > 0 $; otherwise $ b_i = 0 $, and ${\bm{\mathcal{B}}}:=diag[b_1,...,b_N]$ $\in \mathbb{R}^{N \times N}$. Furthermore, the Laplacian matrix is $ \bm{\mathcal{L}} $. \\
The following notations are also used. $\mathbb{R}$: the set of real scalars; $ \mathbb{N} $: the set of natural number; $\mathbb{R}^n$: the set of real column vectors with dimension $n$; $|x|$: absolute value of $x \in \mathbb{R}$; $||\bm{x}||$: Euclidean norm of the column vector $\bm{x}\in \mathbb{R}^n$; $ {\rm T} $: the transpose of vector or square matrix; $ \otimes $: the Kronecker product; $ \bm{1}_N $: column vector of order $ N $ with one entries.
\\
\indent
{\textit{Random Dynamics}:} A random differential equation which typically describes a system in the presence of colored noise is described as 
\begin{align} \label{equpjj1fff}   
\dot{x}= {h}({x},t) + g({x},t) \xi(t),
\end{align}
in which scalar $ x $ is the state and $ \xi(t) \in \mathbb{R}$ is a wide stationary process that represents the colored noise and stands for a noise with uneven power spectral density function. The nonlinear terms are defined by an unknown function $ {h}({x},t) $ and, a known function $ g({x},t) $ that represents the coefficient of the colored noise.
\\
\indent
\textbf{Definition 1\cite{WU2019314}:} Let the continuous zero-mean process $ \xi(t) $ is $ \mathcal{F}_t $-adapted  and satisfies $ \sup_{t \geq t_0} E_p (|\xi(t)|^2) \leq  \xi^{\ast} $ for a positive constant $ \xi^{\ast} $. The described random system \eqref{equpjj1fff} is called noise-to-state practically stable in probability (NSpS-P), if for any positive constant $ l_0 $ there exists a class-$ \scriptstyle \mathcal{K} \mathcal{L}$ functions $ r(\bullet,\bullet) $ and class-$ \scriptstyle \mathcal{K}$ function $ d(\bullet) $ such that for any initial condition $ x_0 $ and for all $ t \in [t_0~~\infty) $, one has 
\begin{align} \label{eq1}
P( |x(t)| \leq r(|x_0|, t-t_0) + d(\varpi)   ) \geq 1 - l_0
\end{align}
where $ \varpi = l^{[a]} \xi^{\ast} + l^{[b]}$ with $ l^{[a]} > 0 $ and $ l^{[b]} \geq 0 $. 
\\
\indent
\textbf{Lemma 1\cite{WU2019314}:} Suppose that there exists a function $ \mathcal{V} \in C^1$ and positive constants $  c_{\gamma}, a_1, a_2, l^{[a]}, l^{[b]} $. The random system \eqref{equpjj1fff} is said to have an NSpS-P unique global solution if 
\begin{align} \label{eq1iiii}
\begin{cases}
a_1 |x|^2 \leq   \mathcal{V}  \leq  a_2 |x|^2,  \\
\dot{\mathcal{V}} \leq - c_{\gamma} \mathcal{V}  + l^{[a]} |\xi|^2 + l^{[b]}.
\end{cases}
\end{align}
Furthermore if \eqref{eq1iiii} holds, then $ \lim_{t \rightarrow \infty } E_p (\mathcal{V} ) \leq \frac{\varpi}{c_{\gamma}}$.

\subsubsection{Problem formulation}
A network of $ N $ uncertain nonlinear follower agents labeled with $ i, i = 1, \ldots, N $ are supposed to reach the racking consensus with respect to the virtual leader signal $ z_{r} \in \mathbb{R}$. Let $ \bm{x}_{i} \in \mathbb{R}^n $ with $  \bm{x}_{i} = [x_{i,1},\thinspace x_{i,2}, \ldots, \thinspace x_{i,n}]^{\rm {T}} $  denote the state vector for the $i$-th agent. The dynamics of each uncertain follower agent subject to random disturbance is described by the following high-order nonlinear differential equations

\begin{align} \label{dyn}
\begin{cases}
\dot{x}_{i,q} = x_{i,q+1} + f_{i,q}(\bar{\bm{x}}_{i,q}) + \bm{g}_{i,q}^{\rm {T}}(\bar{\bm{x}}_{i,q}) \bm{\xi}_i,   \\
~~~~~ q=1,\ldots,n-1, \\
\dot{x}_{i,n} = u_i + f_{i,n}(\bar{\bm{x}}_{i,n}) + \bm{g}_{i,n}^{\rm {T}}(\bar{\bm{x}}_{i,n}) \bm{\xi}_i, \\
z_i = x_{i,1},
\end{cases}
\end{align}
where $ u_i $ and $ z_i $ are respectively the control input and the output of the $ i $-th agent. $ f_{i,q}(\bar{\bm{x}}_{i,q}): \mathbb{R}^q \rightarrow \mathbb{R} $ is an unknown smooth function with $  \bar{\bm{x}}_{i,q} = [x_{i,1},\thinspace x_{i,2}, \ldots, \thinspace x_{i,q}]^{\rm {T}}  $. $ \bm{\xi}_i \in \mathbb{R}^m $ is a continuous zero-mean and $ \mathcal{F}_t $-adapted process representing the colored noise that satisfies 
$ \sup_{t \geq t_0} E_p (\| \bm{\xi}_i \|^2) \leq  \xi_i^{\ast} $
for a positive and bounded constant $ \xi_i^{\ast} $. Furthermore, $ \bm{g}_{i,q}(\bar{\bm{x}}_{i,q}): \mathbb{R}^q \rightarrow \mathbb{R}^m $ is a known and differentiable function. 

\begin{assumption}\upshape \label{AS1}
	The leader trajectory $ z_{r}$ is $ n $-times continuously differentiable and there exists a positive constants $ c_{z} $ such that $ | z_{r}^{(l)} |  \leq c_{z}, $ for all $ l=1,\ldots,n $ where $ z_{r}^{(l)} $ stands for the $ l $-th derivative of $ z_{r} $.
\end{assumption}

\subsubsection{Distributed Virtual Systems}
Suppose that a virtual system is defined locally for each agent $ i $, which evolves according to the following ordinary differential equation
\begin{align} \label{trajectory}
\dot{\bm{\eta}}_{i} = \bm{A} \bm{\eta}_{i} + \bm{B} \bm{K} \bm{\zeta}_{i},
\end{align}
where $ \bm{\eta}_{i} \in \mathbb{R}^n$ with $  \bm{\eta}_{i} = [\eta_{i,1},\thinspace \eta_{i,2}, \ldots, \thinspace \eta_{i,n}]^{\rm {T}} $ is the objective trajectory of the virtual system. $ \bm{\zeta}_{i} \in \mathbb{R}^n $ is an intermittent (with respect to the communication modes) distributed virtual signal which aims at forcing the system (\ref{trajectory}) to achieve the tracking consensus with respect to the leader's information. $ \bm{K} \in \mathbb{R}^{1 \times n} $ is a control gain vector to be chosen according to the design requirements. The pair $ (\bm{A}, \bm{B}) $ is stabilizable. The matrix $ \bm{A} \in \mathbb{R}^{n \times n} $ and the vector $ \bm{B}  \in \mathbb{R}^n $ are defined as
\begin{align*} 
\bm{A} = \kern-0.3em \left[\begin{matrix} 
\bm{0}_{(n-1) \times 1}   &    \bm{I}_{(n-1) \times (n-1) }  \\
0  &  \bm{0}_{1\times (n-1) }
\end{matrix} \right],~~~~~\bm{B} = [\bm{0}_{1\times (n-1) }~~ \thinspace 1]^{\rm {T}}. 
\end{align*}

\begin{figure*}
	\centering
	\includegraphics[scale = 0.72]{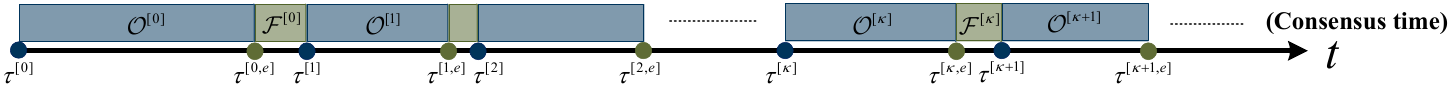}
	\captionsetup{font=footnotesize}
	\caption[]{Schematic time diagram for the intermittent communication operation modes.  	\label{INTER}}
\end{figure*}

\section{Design and Analysis} 
\label{Sec. 3}
Let $ \tau^{[0]} , \tau^{[1]} , \ldots, \tau^{[\kappa]} , \tau^{[\kappa+1]} , \ldots $ denote the time sequence at which the communication network switches to the ON mode. In our problem formulation, the first communicating ON mode instant is set to $ \tau^{[0]} =0 $. Subsequent to each ON mode interval, the networked system experiences an OFF mode of a particular duration. Let $ \tau^{[0,e]} , \tau^{[1,e]} , \ldots, \tau^{[\kappa,e]} , \tau^{[\kappa+1,e]} , \ldots $ denote the time sequence at which the communication network switches to the OFF mode. The overall time sequence of switches in order of occurrence is as follows: $ \tau^{[0]}, \tau^{[0,e]} , \tau^{[1]},  \tau^{[1,e]} , \ldots, \tau^{[\kappa]}, \tau^{[\kappa,e]} , \ldots $. To recapitulate, the following holds true regarding the communication intermittent modes
\begin{align} \label{Intermittent}
\mathscr{M}:
\begin{cases}
\mathcal{O}^{[\kappa]}:~ t \in [\tau^{[\kappa]}, \tau^{[\kappa,e]})~~ &\text{ON Mode} ,\\
\mathcal{F}^{[\kappa]}:~ t \in [\tau^{[\kappa,e]}, \tau^{[\kappa+1]})~~ &\text{OFF Mode}    ,
\end{cases}
\end{align}
for $ \kappa \in \mathbb{N} $. The total duration of two consecutive On and OFF mode intervals is denoted by $ {T}^{[\kappa]} = \tau^{[\kappa+1]} - \tau^{[\kappa]} $ which satisfies $ 0 < {T}^{[\kappa]} \leq {T}$ for a positive constant $ {T} $. The duration of a particular ON mode interval for communication network is also defined as $ \mathscr{G}^{[\kappa]} = \tau^{[\kappa,e]} - \tau^{[\kappa]}$. Easily, the duration of a particular OFF mode interval for communication network is derived as $ \mathscr{C}^{[\kappa]} = {T}^{[\kappa]} - \mathscr{G}^{[\kappa]}$. To better illustrate the time notation regarding the intermittent operation modes, a schematic diagram is provided in Fig. \ref{INTER}.


Design the intermittent distributed virtual control signal $ \bm{\zeta}_{i}(t) $ as
\begin{align} \label{Intermittentvirtualcontrol}
\bm{\zeta}_{i} =
\begin{cases}
\sum_{j \in {N}_i}  a_{i,j} (\bm{\eta}_{j} - \bm{\eta}_{i} ) + b_i (\bar{\bm{z}}_{r} - \bm{\eta}_{i}), & t \in \mathcal{O}^{[\kappa]},\\
\bm{0}_{n \times 1}, & t \in  \mathcal{F}^{[\kappa]}   ,
\end{cases}
\end{align}
where $ \bar{\bm{z}}_{r} =  [ z_{r},\thinspace z_{r}^{\prime},\thinspace \ldots, z_{r}^{(n-1)}]^{\rm {T}} $.

Let $ \bm{\varsigma} = [\bm{\varsigma}_{1}^{\rm {T}},\thinspace \ldots,\thinspace \bm{\varsigma}_{N}^{\rm {T}}]^{\rm {T}} $ and $  \bm{\eta} = \big[\bm{\eta}_{1}^{\rm {T}},\thinspace \ldots,\thinspace \bm{\eta}_{N}^{\rm {T}} \big]^{\rm {T}} $, where $ \bm{\varsigma}_{i} = \bm{\eta}_{i} - \bar{\bm{z}}_{r} $.
A Lyapunov function candidate is then chosen as 
$$ \mathcal{V}_{e} = \bm{\varsigma}^{\rm {T}} ( \mathbf{I}_N \otimes \bm{P} ) \bm{\varsigma} $$
in which $ \bm{P} $ is a positive definite matrix that satisfies the following inequalities
\begin{align} \small \label{Ricatti}
\begin{cases}
\bm{A}^{\rm {T}} \bm{P} + \bm{P} \bm{A} - 2 \bm{P} \bm{B} \bm{B}^{\rm {T}} \bm{P} + c_{1} \bm{I} <  0, \\
\bm{A}^{\rm {T}} \bm{P} + \bm{P} \bm{A} - c_{3} \bm{P} <  0, 
\end{cases}
\end{align}
where $ c_{1} $ and $ c_{3} $ are some positive design constants. According to \eqref{Intermittentvirtualcontrol}, for all $ t \in \mathcal{O}^{[\kappa]} $, the time derivative of the Lyapunov function candidate $ \mathcal{V}_{e} $ gives
\begin{align} \label{Lya1dot}
\dot{\mathcal{V}}_{e} &= \bm{\varsigma}^{\rm {T}} (  \mathbf{I}_N \otimes ( \bm{A}^{\rm {T}} \bm{P} + \bm{P} \bm{A}) - (\bm{\mathcal{L}}+\bm{\mathcal{B}})\otimes 2\bm{P} \bm{B} \bm{K}  ) \bm{\varsigma} \nonumber \\
&- 2 \bm{\varsigma}^{\rm {T}} ( \mathbf{1}_N \otimes z_{r}^{(n)} \bm{P} \bm{B} ).
\end{align}
The control gain vector is designed as 
$ \bm{K} = c_{0} \bm{B}^{\rm {T}} \bm{P} $ 
where the positive integer $ c_{0} $ is chosen such that 
$ c_{0} \lambda_{\min}(\bm{\mathcal{L}}+\bm{\mathcal{B}}) \geq 1 $.
The time derivative of the Lyapunov function candidate $ \dot{\mathcal{V}}_{e} $ satisfies 
$$ \dot{\mathcal{V}}_{e} \leq - c_{\alpha,1}  \bm{\varsigma}^{\rm {T}} \bm{\varsigma} + c_{\beta} $$ 
with the constants    
$$ c_{\alpha,1} = c_{1} -\frac{ c_{z} N \|\bm{P} \|}{c_{2}}$$
and 
$$ c_{\beta} = c_{2} c_{z} \|\bm{P}\|$$ 
and positive constant $ c_{2} $. 
Hence, 
$$ \dot{\mathcal{V}}_{e}(t) \leq - \delta_{\alpha} \mathcal{V}_{e}(t) + c_{\beta} $$
with 
$$ \delta_{\alpha} = \frac{c_{\alpha,1} }{\|\bm{P}\|}. $$

\noindent
 For $ t\in \mathcal{O}^{[\kappa]} $, it holds that
\begin{align} \label{Lyadfffyyy}
\mathcal{V}_{e}(t)  \leq  \mathcal{V}_{e}(\tau^{[\kappa]}) e^{-\delta_{\alpha}(t-\tau^{[\kappa]})}  + \frac{c_{\beta}}{\delta_{\alpha}}(1-e^{-\delta_{\alpha} (t-\tau^{[\kappa]})}).
\end{align}

\noindent
 For all $ t \in \mathcal{F}^{[\kappa]} $, let 
 $$ \delta_{\beta} = c_{3} + \frac{ c_{z} N }{c_{2}}.$$  
 The time derivative of the Lyapunov function candidate gives 
 $$ \dot{\mathcal{V}}_{e}(t) \leq  \delta_{\beta} \mathcal{V}_{e}(t) + c_{\beta}.$$

 \noindent
Consequently, it holds that
\begin{align} \label{Lyadotttt5extra}
\mathcal{V}_{e}(t)  \leq \mathcal{V}_{e}(\tau^{[\kappa,e]}) e^{\delta_{\beta}(t-\tau^{[\kappa,e]})}  + \frac{c_{\beta}}{\delta_{\beta}}(e^{\delta_{\beta} (t-\tau^{[\kappa,e]})} - 1).
\end{align}

\noindent
Let 
$$ \Lambda^{[\kappa]} =  \delta_{\alpha}(\tau^{[\kappa,e]}-\tau^{[\kappa]}) - \delta_{\beta}(\tau^{[\kappa+1]} - \tau^{[\kappa,e]}),$$
and 
$$ \varkappa =  (\frac{c_{\beta}}{\delta_{\beta}} + \frac{c_{\beta}}{\delta_{\alpha}}) e^{\delta_{\beta} {T}} - \frac{c_{\beta}}{\delta_{\beta}}.$$
Therefore, there exists a $ {T} $ for which $ \varkappa > 0 $.


\begin{thm} \label{THEEERM1} \upshape 
Under Assumption 1, consider the auxiliary trajectory of $ i $-th agent in \eqref{trajectory}. Suppose that the data is intermittently transmitted among the agents according to the intermittent rule in \eqref{Intermittentvirtualcontrol}. Then, design the control gain $ \bm{K} = c_{0} \bm{B}^{\rm {T}} \bm{P}$ for the objective trajectory. If for all  $ \kappa \in \mathbb{N} $
\begin{align} \label{Condi}
 (\delta_{\alpha} + \delta_{\beta})\mathscr{G}^{[\kappa]} -  \delta_{\beta} {T}^{[\kappa]} > 0,
\end{align}
holds true, then the communication network is resilient to be switched to the OFF mode with a maximum duration of 
$$ \mathscr{C}^{[\kappa]} \approx  (\mathscr{G}^{[\kappa]}) (\frac{\delta_{\alpha} + \delta_{\beta}}{\delta_{\beta}} - 1) \times 100 \%.$$
Also, it holds that 
$ \lim\nolimits_{t \rightarrow \infty} | \eta_{i} - z_{r} | \leq \varepsilon^{[e]}$,
for a positive constant $ \varepsilon^{[e]} $.
\end{thm}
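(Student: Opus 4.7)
The plan is to chain the ON-mode and OFF-mode Lyapunov estimates \eqref{Lyadfffyyy} and \eqref{Lyadotttt5extra} into a single recursion on the sampled values $\mathcal{V}_{e}(\tau^{[\kappa]})$, and then use the dwell-time condition \eqref{Condi} to exhibit a contraction with bounded forcing. This directly delivers a uniform ultimate bound on $\mathcal{V}_{e}(t)$, from which both the tracking claim $\lim_{t\to\infty} |\eta_{i,1}-z_{r}| \leq \varepsilon^{[e]}$ and the explicit resilience ratio follow by algebraic rearrangement of the bounds the paper has already accumulated.

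The first technical step is to evaluate \eqref{Lyadfffyyy} at $t = \tau^{[\kappa,e]}$, which bounds $\mathcal{V}_{e}(\tau^{[\kappa,e]})$ in terms of $\mathcal{V}_{e}(\tau^{[\kappa]})$ with contraction factor $e^{-\delta_{\alpha}\mathscr{G}^{[\kappa]}}$. Substituting this into \eqref{Lyadotttt5extra} evaluated at $t = \tau^{[\kappa+1]}$, whose growth factor is $e^{\delta_{\beta}\mathscr{C}^{[\kappa]}}$, produces a one-step recursion whose homogeneous rate is exactly $e^{-\Lambda^{[\kappa]}}$. Because $\mathscr{C}^{[\kappa]} = {T}^{[\kappa]} - \mathscr{G}^{[\kappa]}$, condition \eqref{Condi} is equivalent to $\Lambda^{[\kappa]} > 0$; and using the uniform upper bound ${T}^{[\kappa]} \leq {T}$ in the $e^{\delta_{\beta}\mathscr{C}^{[\kappa]}}$ factors, the composite residual is seen to be upper bounded by precisely the constant $\varkappa$ defined in the excerpt.

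Iterating the resulting inequality $\mathcal{V}_{e}(\tau^{[\kappa+1]}) \leq e^{-\Lambda^{[\kappa]}}\mathcal{V}_{e}(\tau^{[\kappa]}) + \varkappa$ under a uniform margin $\Lambda^{[\kappa]} \geq \Lambda_{\min} > 0$ implied by \eqref{Condi} yields $\mathcal{V}_{e}(\tau^{[\kappa]}) \leq e^{-\kappa\Lambda_{\min}}\mathcal{V}_{e}(0) + \varkappa/(1-e^{-\Lambda_{\min}})$, converging to the steady value $\varkappa/(1-e^{-\Lambda_{\min}})$. Re-inserting this bound into \eqref{Lyadfffyyy} on any subsequent ON interval and \eqref{Lyadotttt5extra} on any OFF interval lifts the ultimate bound from the switching instants to all $t \geq 0$. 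Since $\mathcal{V}_{e} \geq \lambda_{\min}(\bm{P})\|\bm{\varsigma}\|^{2}$ and $\bm{\varsigma}_{i} = \bm{\eta}_{i} - \bar{\bm{z}}_{r}$, extracting the first component yields a constant $\varepsilon^{[e]}$ with $\limsup_{t\to\infty} |\eta_{i,1} - z_{r}| \leq \varepsilon^{[e]}$, which is the second claim. The explicit OFF-mode resilience margin is obtained separately by rewriting \eqref{Condi} as $\mathscr{C}^{[\kappa]}/\mathscr{G}^{[\kappa]} < \delta_{\alpha}/\delta_{\beta} = (\delta_{\alpha}+\delta_{\beta})/\delta_{\beta} - 1$ and reading off the stated percentage.

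The main obstacle I expect is the bookkeeping around the forcing term: one must verify that after composing \eqref{Lyadfffyyy} with \eqref{Lyadotttt5extra} every interval-dependent exponent $\delta_{\beta}\mathscr{C}^{[\kappa]}$ may be uniformly replaced by $\delta_{\beta}{T}$ without spoiling the $e^{-\Lambda^{[\kappa]}}$ contraction factor on the homogeneous part, and that the resulting constant is exactly the $\varkappa$ pre-computed in the excerpt. A secondary care-point is that \eqref{Condi} as stated is a strict inequality for each $\kappa$, whereas the geometric-series argument requires a uniform positive lower bound $\Lambda_{\min}$; this is tacitly provided by the uniform upper bound ${T}^{[\kappa]} \leq {T}$ together with the standing assumption that the designer can choose $\mathscr{G}^{[\kappa]}$ so that \eqref{Condi} holds with margin, and this should be flagged explicitly in the final write-up.
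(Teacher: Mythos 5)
Your proposal is correct and follows essentially the same route as the paper's own proof: chaining the ON-mode estimate \eqref{Lyadfffyyy} with the OFF-mode estimate \eqref{Lyadotttt5extra} into a sampled recursion with contraction $e^{-\Lambda^{[\kappa]}}$ and residual $\varkappa$, summing the geometric series, lifting the bound to all $t$ via the interval estimates, and reading the resilience percentage off the equivalence of \eqref{Condi} with $\delta_{\alpha}\mathscr{G}^{[\kappa]} > \delta_{\beta}\mathscr{C}^{[\kappa]}$. Your remark that a uniform margin $\Lambda_{\min}>0$ is needed (rather than the per-$\kappa$ strict inequality alone) is a legitimate point that the paper passes over by simply setting $\Lambda=\min_{l}\Lambda^{[l]}$, but it does not change the substance of the argument.
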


\begin{proof}
By recursion, the following is derived 
\begin{align}  \label{lyfff}
{\mathcal{V}}_{e}(\tau^{[\kappa+1]}) &\leq [{\mathcal{V}}_{e}(\tau^{[\kappa]}) - \frac{c_{\beta}}{\delta_{\alpha}}] e^{-\Lambda^{[\kappa]}} \nonumber \\
&~~~~~~~~~~~+ (\frac{c_{\beta}}{\delta_{\beta}} + \frac{c_{\beta}}{\delta_{\alpha}}) e^{\delta_{\beta}(\tau^{[\kappa+1]} - \tau^{[\kappa,e]})} - \frac{c_{\beta}}{\delta_{\beta}} \nonumber \\
& \leq [{\mathcal{V}}_{e}(\tau^{[\kappa]}) - \frac{c_{\beta}}{\delta_{\alpha}}] e^{-\Lambda^{[\kappa]}} + \varkappa \nonumber \\
& \leq {\mathcal{V}}_{e}(t_0) e^{-\sum_{l=0}^{\kappa} \Lambda^{[l]} } - \frac{c_{\beta}}{\delta_{\alpha}}[e^{-\sum_{l=0}^{\kappa} \Lambda^{[l]} } \nonumber \\
& +e^{-\sum_{l=1}^{\kappa} \Lambda^{[l]} } + \ldots + e^{-\sum_{l=\kappa}^{\kappa} \Lambda^{[l]} }] \nonumber \\
& + \varkappa [e^{-\sum_{l=1}^{\kappa} \Lambda^{[l]} } + e^{-\sum_{l=2}^{\kappa} \Lambda^{[l]} } + \ldots \nonumber \\
&  + e^{-\sum_{l=\kappa}^{\kappa} \Lambda^{[l]} }] + \varkappa \nonumber \\
& \leq {\mathcal{V}}_{e}(t_0) e^{-\sum_{l=0}^{\kappa} \Lambda^{[l]} } + \varkappa [e^{-\sum_{l=1}^{\kappa} \Lambda^{[l]} } \nonumber \\
&  + e^{-\sum_{l=2}^{\kappa} \Lambda^{[l]} } + \ldots  + e^{-\sum_{l=\kappa}^{\kappa} \Lambda^{[l]} }] + \varkappa.
\end{align}
Let $ \Lambda = \min\limits_{0 \leq l \leq \kappa} \Lambda^{[l]}$. Then, 
\begin{align}  \label{lyfffextra}
{\mathcal{V}}_{e}(\tau^{[\kappa+1]}) &\leq {\mathcal{V}}_{e}(t_0) e^{-(\kappa+1) \Lambda }  + \varkappa [e^{-\kappa \Lambda } + \ldots + e^{-\Lambda}] + \varkappa \nonumber \\
& \leq [{\mathcal{V}}_{e}(t_0) + \frac{\varkappa e^{\Lambda}}{1 -  e^{\Lambda}}] e^{-(\kappa+1) \Lambda } \nonumber \\
&~~~-\frac{\varkappa }{1 -  e^{\Lambda}} + \varkappa.
\end{align}
For all $ t > 0 $, there exists $ \iota \in \mathbb{N} $ such that 
$ t \in [\tau^{[\iota+1]}, \tau^{[\iota+2]}) $. 
Since, $ t < (\iota+2){T}$, it also holds that 
$$ -(\iota+1) \Lambda \leq -\frac{\Lambda}{{T}}t - \Lambda.$$
For $ t \in [\tau^{[\iota+1]}, \tau^{[\iota+1,e]}) $, it holds that 
\begin{align}  \label{lyainterzx}
{\mathcal{V}}_{e}(t) &\leq \mathcal{V}_{e}(\tau^{[\iota+1]}) e^{-\delta_{\alpha}(t - \tau^{[\iota+1]})} +  \frac{c_{\beta}}{\delta_{\alpha}} (1 - e^{-\delta_{\alpha}(t - \tau^{[\iota+1]})})\nonumber \\
& \leq \mathcal{V}_{e}(\tau^{[\iota+1]}) + \frac{c_{\beta}}{\delta_{\alpha}} \nonumber \\
& \leq [{\mathcal{V}}_{e}(t_0) + \frac{\varkappa e^{\Lambda}}{1 -  e^{\Lambda}}] e^{-(\iota+1) \Lambda } -\frac{\varkappa }{1 -  e^{\Lambda}} + \varkappa + \frac{c_{\beta}}{\delta_{\alpha}} \nonumber \\
& \leq \pi_{1} e^{-\frac{\Lambda}{{T}}t}  + \varepsilon_{1},
\end{align}
where 
$$ \pi_{1} =  {\mathcal{V}}_{e}(t_0) e^{- \Lambda} + \frac{\varkappa }{1 - e^{\Lambda}}$$ 
and 
$$ \varepsilon_{1} = -\frac{\varkappa }{1 -  e^{\Lambda}} + \varkappa + \frac{c_{\beta}}{\delta_{\alpha}}.$$ 
Similarly for $ t \in [\tau^{[\iota+1,e]}, \tau^{[\iota+2]}) $, it holds that
\begin{align}  \label{lyaintercccxcxextra}
{\mathcal{V}}_{e}(t) \leq \pi_{2} e^{-\frac{\Lambda}{{T}}t}  + \varepsilon_{2},
\end{align}
where 
$$ \pi_{2} = e^{\delta_{\beta}{T}} [{\mathcal{V}}_{e}(t_0) e^{- \Lambda} + \frac{\varkappa }{1 - e^{\Lambda}}]$$
and 
$$ \varepsilon_{2} =  e^{\delta_{\beta}{T}} [-\frac{\varkappa }{1 -  e^{\Lambda}} + \varkappa + \frac{c_{\beta}}{\delta_{\alpha}}] + \varkappa.$$
\\
Let $ \pi^{\star} = \max \{\pi_{1}, \pi_{2} \}$ and $ \varepsilon^{\star} = \max \{\varepsilon_{1}, \varepsilon_{2} \} $. Then, 
\begin{align} \label{VFinal}
{\mathcal{V}}_{e}(t) \leq  \pi^{\star} e^{-\frac{\Lambda}{{T}}t} +  \varepsilon^{\star},
\end{align}
for all $ t > 0 $. If \eqref{Condi} holds true, then $ \Lambda > 0 $ and 
$$ \max\{\mathscr{C}^{[\kappa]}\} \approx  (\mathscr{G}^{[\kappa]}) (\frac{\delta_{\alpha} + \delta_{\beta}}{\delta_{\beta}} - 1). $$ 
Hence, the closed-loop systems error $ \bm{\varsigma}_{i} $ exponentially converge to $ \varepsilon^{\star} $. Accordingly, the state vector of each agent $ i $ exponentially converges to a bounded region around $ \bar{\bm{z}}_{r} $ as time tends to infinity, i.e., 
$$ \lim\nolimits_{t \rightarrow \infty} | \eta_{i} - z_{r} | \leq \varepsilon^{[e]}.$$ 
This completes the proof. 
\end{proof}

To continue, the following closed-loop system errors are defined 
\begin{align} \label{error1nonlinear}
e_{i,1} = z_{i} - \eta_{i,1}, ~~~~ e_{i,q} = x_{i,q} - \alpha_{i,q-1}
\end{align}
where $ q = 2, \ldots, n $ and $ \alpha_{i,q-1} $ is the virtual input to be designed later. We will derive the control laws and adaptive rules for the follower agents in what follows.
\\
Let $ \bar{f}_{i,1} = f_{i,1}(\bar{\bm{x}}_{i,1}) - \eta_{i,2} $ and $ \bar{f}_{i,q} = f_{i,q}(\bar{\bm{x}}_{i,q}) - \dot{\alpha}_{i,q-1} $ for $ q = 2, \ldots, n $. The neural networks (NNs)\cite{12} approach, is in most cases utilized locally to approximate unknown functions of the form $ \bar{f}_{i,q} $ for $ q = 1,\ldots,n $. Hence, for positive constants $\epsilon_{i,q}  $, it follows that $ \bar{f}_{i,q} = \bm{\theta}_{i,q}^{\rm T} \bm{\varphi}_{i,q} + \epsilon_{i,q}  $. $ \bm{\theta}_{i,q}$ is an unknown vector. The activation vectors of Gaussian basis functions are denoted by $ \bm{\varphi}_{i,q} $. Additionally, one has $|\epsilon_{i,q}|^2 \leq \epsilon_{i,q}^{\ast}$, for some unknown positive bounded parameters $ \epsilon_{i,q}^{\ast} $. Since $ \bm{\theta}_{i,q}$ is an unknown constant vector, the approximation errors can be defined as 
$$ \tilde{\bm{\theta}}_{i,q}  = \bm{\theta}_{i,q}  -  \hat{\bm{\theta}}_{i,q}  $$ 
where $ \hat{\bm{\theta}}_{i,q} $ stands for the approximated value of $ \bm{\theta}_{i,q}$. 
By applying Young's inequality one gets 
$$ e_{i,q} e_{i,q+1} \leq \frac{\rho_{i,q}^2}{2} e_{i,q}^2  + \frac{1}{2 \rho_{i,q}^2} e_{i,q+1}^2 $$
for $ q = 1, \ldots, n-1 $, and the followings hold true for $ q = 1, \ldots, n $
\begin{align*} 
\begin{cases}
e_{i,q} \epsilon_{i,q} \leq \frac{\rho_{i,q}^2}{2} e_{i,q}^2 + \frac{1}{2 \rho_{i,q}^2} \epsilon_{i,q}^{\ast},  \\
e_{i,q} \bm{g}_{i,q}^{\rm {T}}(\bar{\bm{x}}_{i,q}) \bm{\xi}_i \leq  \frac{\rho_{i,q}^2}{2} e_{i,q}^2 \| \bm{g}_{i,q} \|^2 + \frac{1}{2 \rho_{i,q}^2} \| \bm{\xi}_i \|^2, 
\end{cases}
\end{align*}
with positive design constants $ \rho_{i,q},~q = 1, \ldots, n $.
\\
\indent 
The actual control input $ u_i $ is designed as follows 
\begin{align} 
u_i &= -(\mathcal{K}_{i,n} + \frac{\rho_{i,n}^2}{2} \| \bm{g}_{i,n} \|^2)  e_{i,n} - \hat{\bm{\theta}}_{i,n}^{\rm T} \bm{\varphi}_{i,n}, \label{actual} 
\end{align}
and the $ q $-th virtual input $ \alpha_{i,q} $ for $ q = 1, \ldots, n-1 $, and the $ q $-th adaptive rule for $ q = 1, \ldots, n $ are 
\begin{align} 
\alpha_{i,q} &= -(\mathcal{K}_{i,q} + \frac{\rho_{i,q}^2}{2} \| \bm{g}_{i,q} \|^2)  e_{i,q} - \hat{\bm{\theta}}_{i,q}^{\rm T} \bm{\varphi}_{i,q}, \label{Virtual} \\
\dot{\hat{\bm{\theta}}}_{i,q}  &= \gamma_{i,q}( {e}_{i,q}\bm{\varphi}_{i,q} - \sigma_{i,q} \hat{\bm{\theta}}_{i,q}  ), \label{Adaptiveq}
\end{align}
where $ \sigma_{i,q} $ is a $ \sigma $-modification parameter and $ \mathcal{K}_{i,q} $ is a positive gain to be chosen according to the control objectives. The design procedure is now complete and the following Theorem is provided to summarize the results.

\begin{thm} \label{THEEERM2} \upshape 
Consider a group of uncertain nonlinear MASs defined in (\ref{dyn}). Design the virtual signals and adaptive rules in \eqref{Virtual}, \eqref{Adaptiveq}. Furthermore, consider that the actuator updates are done according to \eqref{actual}. Then, the tracking errors are NSpS-P with a unique global solution. Furthermore, 
$$ \lim\nolimits_{t \rightarrow \infty} E_p (|z_{i} - z_{r}|) < \varepsilon_{i,v} $$
for all $ i \in V $ and a positive constant $ \varepsilon_{i,v} $ i.e., the tracking consensus performance can be achieved. 
\end{thm}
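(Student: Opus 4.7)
The plan is to construct a composite Lyapunov function for the tracking and parameter estimation errors, derive a Lemma 1-style dissipation inequality, and then combine the resulting NSpS-P bound with the auxiliary-system bound from Theorem \ref{THEEERM1} via a triangle inequality.

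First I would pick
$$ \mathcal{V} = \sum_{i=1}^{N}\sum_{q=1}^{n}\left( \tfrac{1}{2} e_{i,q}^{2} + \tfrac{1}{2\gamma_{i,q}} \tilde{\bm{\theta}}_{i,q}^{\rm T} \tilde{\bm{\theta}}_{i,q} \right), $$
which already sandwiches nicely between quadratic bounds as required by Lemma 1. Next I would compute $\dot e_{i,q}$ using \eqref{dyn} and the error definitions \eqref{error1nonlinear}: for $q=1$ one obtains $\dot e_{i,1} = e_{i,2} + \alpha_{i,1} + \bar f_{i,1} + \bm g_{i,1}^{\rm T}\bm\xi_{i}$, for $2\le q\le n-1$ the analogous identity with $\bar f_{i,q} = f_{i,q}-\dot\alpha_{i,q-1}$, and for $q=n$ the expression with $u_{i}$ replacing $e_{i,n+1}+\alpha_{i,n}$. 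Substituting the NN decomposition $\bar f_{i,q} = \bm{\theta}_{i,q}^{\rm T}\bm{\varphi}_{i,q}+\epsilon_{i,q}$, splitting the unknown weight as $\bm{\theta}_{i,q} = \hat{\bm{\theta}}_{i,q}+\tilde{\bm{\theta}}_{i,q}$, and inserting the virtual control \eqref{Virtual} and the actuator update \eqref{actual} makes the NN term cancel against $-\hat{\bm{\theta}}_{i,q}^{\rm T}\bm{\varphi}_{i,q}$ and generates the damping $-\mathcal{K}_{i,q}e_{i,q}^{2}$ together with the noise-cancellation factor $-\tfrac{\rho_{i,q}^{2}}{2}\|\bm g_{i,q}\|^{2}e_{i,q}^{2}$.

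Then I would use the three Young inequalities recorded just before \eqref{actual} to absorb $e_{i,q}e_{i,q+1}$, $e_{i,q}\epsilon_{i,q}$, and $e_{i,q}\bm{g}_{i,q}^{\rm T}\bm\xi_{i}$, noting that the noise cross term contributes exactly $\tfrac{\rho_{i,q}^{2}}{2}e_{i,q}^{2}\|\bm g_{i,q}\|^{2}$, which is wiped out by the $\|\bm g_{i,q}\|^{2}$ factor built into $u_{i}$ and $\alpha_{i,q}$, leaving only $\tfrac{1}{2\rho_{i,q}^{2}}\|\bm\xi_{i}\|^{2}$. The adaptive law \eqref{Adaptiveq} contributes $\tfrac{1}{\gamma_{i,q}}\tilde{\bm{\theta}}_{i,q}^{\rm T}\dot{\tilde{\bm{\theta}}}_{i,q} = -\tilde{\bm{\theta}}_{i,q}^{\rm T}(e_{i,q}\bm{\varphi}_{i,q}-\sigma_{i,q}\hat{\bm{\theta}}_{i,q})$, whose first part cancels the remaining $\tilde{\bm{\theta}}$ cross term from the NN split, while $-\sigma_{i,q}\tilde{\bm{\theta}}_{i,q}^{\rm T}\hat{\bm{\theta}}_{i,q} \le -\tfrac{\sigma_{i,q}}{2}\|\tilde{\bm{\theta}}_{i,q}\|^{2}+\tfrac{\sigma_{i,q}}{2}\|\bm{\theta}_{i,q}\|^{2}$ delivers the damping on the estimation errors. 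Choosing the gains $\mathcal{K}_{i,q}$ large enough to dominate the residual $\tfrac{\rho_{i,q}^{2}}{2}$ terms yields
$$ \dot{\mathcal{V}} \le -c_{\gamma}\mathcal{V} + l^{[a]}\sum_{i=1}^{N}\|\bm\xi_{i}\|^{2} + l^{[b]},$$
which is precisely the hypothesis of Lemma 1, so the tracking errors and weight errors are NSpS-P with a unique global solution and $\lim_{t\to\infty}E_{p}(\mathcal{V}) \le \varpi/c_{\gamma}$. In particular, $\lim_{t\to\infty}E_{p}(|e_{i,1}|) \le \bar\varepsilon_{i,1}$ for some positive constant, so the triangle inequality $|z_{i}-z_{r}| \le |e_{i,1}| + |\eta_{i,1}-z_{r}|$ combined with the ultimate bound $|\eta_{i,1}-z_{r}| \le \varepsilon^{[e]}$ from Theorem \ref{THEEERM1} gives the desired $\lim_{t\to\infty}E_{p}(|z_{i}-z_{r}|) < \varepsilon_{i,v}$ with $\varepsilon_{i,v} = \bar\varepsilon_{i,1} + \varepsilon^{[e]}$.

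The hard part will be justifying the Lyapunov derivative across the intermittent switching instants: $\bm{\eta}_{i}$ is only piecewise smooth because $\bm\zeta_{i}$ jumps between modes in \eqref{Intermittentvirtualcontrol}, so $\dot\alpha_{i,q-1}$ inherits those jumps and $\mathcal{V}$ is only absolutely continuous. One must therefore verify the Lemma 1 inequality on each mode interval and show no impulsive jumps in $\mathcal{V}$ at $\tau^{[\kappa]},\tau^{[\kappa,e]}$, using continuity of $\bm{\eta}_{i}$ (only $\dot{\bm{\eta}}_{i}$ is discontinuous) and the fact that $e_{i,q}$ and $\hat{\bm{\theta}}_{i,q}$ evolve continuously. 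A secondary subtlety is NN-domain validity: because $\bm{\eta}_{i}$ stays bounded by Theorem \ref{THEEERM1} and $\bar{\bm z}_{r}$ is bounded by Assumption \ref{AS1}, the NN arguments remain in a compact set on which $\bm{\varphi}_{i,q}$ is admissible, so the approximation bound $|\epsilon_{i,q}|^{2}\le \epsilon_{i,q}^{\ast}$ is legitimate throughout.
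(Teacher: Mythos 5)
Your proposal follows essentially the same route as the paper's proof: the same quadratic Lyapunov function in $e_{i,q}$ and $\tilde{\bm{\theta}}_{i,q}$ (the paper keeps it per-agent rather than summed over $i$, which is immaterial), the same Young-inequality bookkeeping leading to a Lemma~1 dissipation inequality, and the same triangle-inequality combination of $\lim_{t\to\infty}E_p(|z_i-\eta_{i,1}|)$ with the bound $\lim_{t\to\infty}|\eta_{i}-z_r|\leq\varepsilon^{[e]}$ from Theorem~\ref{THEEERM1}. Your closing remarks on continuity of $\mathcal{V}$ across the switching instants and on NN-domain validity are sensible additional care that the paper does not explicitly address, but they do not alter the argument.
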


\begin{proof}
    Let 
    $ \Delta_{i,q} = \mathcal{K}_{i,q} - \rho_{i,q}^2 $
    for $ q = 1, \ldots, n-1 $
    and 
    $ \Delta_{i,n} = \mathcal{K}_{i,n} - 0.5 \rho_{i,n}^2 $. 
    By applying Young's inequality, it follows that 
    $$ \sigma_{i,q} \tilde{\bm{\theta}}_{i,q}^{\rm T} \bm{\theta}_{i,q} \leq \frac{1}{2}  \sigma_{i,q} \tilde{\bm{\theta}}_{i,q}^{\rm T} \tilde{\bm{\theta}}_{i,q} + \frac{1}{2}  \sigma_{i,q} \| \bm{\theta}_{i,q} \|^2. $$
    A Lyapunov function candidate is chosen as 
\begin{align} 
\mathcal{V}_{i} = \sum_{q=1}^{n} \frac{1}{2} e_{i,q}^{2} + \sum_{q=1}^{n} \frac{1}{2 \gamma_{i,q} } \tilde{\bm{\theta}}_{i,q}^{\rm T} \tilde{\bm{\theta}}_{i,q},
\end{align}
where $ \gamma_{i,q} $ is a positive design parameter. Then, the time derivative of $ \mathcal{V}_{i} $ gives
\begin{align} \label{Lyapunovtotfooo}
\dot{\mathcal{V}}_{i} \leq &-\Delta_{i,1} e_{i,1}^2 -\sum_{q=2}^{n} (\Delta_{i,q} - \frac{1}{2 \rho_{i,q-1}^2}) e_{i,q}^2 \nonumber \\
& - \frac{1}{2} \sum_{q=1}^{n} \sigma_{i,q} \tilde{\bm{\theta}}_{i,q}^{\rm T} \tilde{\bm{\theta}}_{i,q} +  l_{i}^{[a]} \| \bm{\xi}_i \|^2 + l_{i}^{[b]},
\end{align}
where $ l_{i}^{[a]} = \sum_{q=1}^{n} \frac{1}{2 \rho_{i,q}^2} $ and 
$$ l_{i}^{[b]} = \sum_{q=1}^{n} \frac{1}{2 \rho_{i,q}^2} \epsilon_{i,q}^{\ast} + \sum_{q=1}^{n} \frac{1}{2}  \sigma_{i,q} \| \bm{\theta}_{i,q} \|^2.$$
Let 
\begin{align*}\small
c_{i,\gamma} = \min \bigg\{ &2\Delta_{i,1},  \big[ 2(\Delta_{i,q} - \frac{1}{2 \rho_{i,q-1}^2}) \big]_{q=2,\ldots,n,}\\
&\big[ \frac{\gamma_{i,q}}{\sigma_{i,q}}  \big]_{q=1,\ldots,n.} \bigg\}.
\end{align*}
Hence, 
\begin{align}\label{Vtotalfffoooo}
\dot{\mathcal{V}}_{i,o} \leq  -c_{i,\gamma} \mathcal{V}_{i,o} + l_{i}^{[a]} \| \bm{\xi}_i \|^2 + l_{i}^{[b]}.
\end{align}
Therefore, by defining 
$ \varpi_{i} = l_{i}^{[a]}  \xi_i^{\ast} + l_{i}^{[b]} $ 
and applying Lemma 1, it holds that 
$$ \lim\nolimits_{t \rightarrow \infty} E_p (\mathcal{V}_{i,o}) \leq \varpi_{i} / c_{i,\gamma}.  $$ 

Hence, the error signals $ e_{i,q} $, and $ \tilde{\bm{\theta}}_{i,q} $ for $ q = 1, \ldots,n $ are all NSpS-P with a unique global solution according to Lemma 1. Accordingly, 
$ \lim\nolimits_{t \rightarrow \infty} E_p (|z_{i} - \eta_{i,1}|) < \varepsilon_{i,x} $ 
for a positive constant $ \varepsilon_{i,x} $. From \eqref{VFinal} it follows that
$ \lim\nolimits_{t \rightarrow \infty} | \eta_{i} - z_{r} | \leq \varepsilon^{[e]}$.
Therefore, 
$$ \lim\nolimits_{t \rightarrow \infty} E_p (|z_{i} - z_{r}|) < \varepsilon_{i,v} $$ 
and the tracking consensus performance can ultimately be achieved.
This completes the proof.
\end{proof}

\begin{figure}[t]
	\centering
	\includegraphics[scale = 1.5]{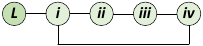} 
	\captionsetup{font=footnotesize}
	\caption[]{The communication graph. 	\label{Graph}}
\end{figure}


 \begin{figure}
	\centering
	\includegraphics[scale = 0.2]{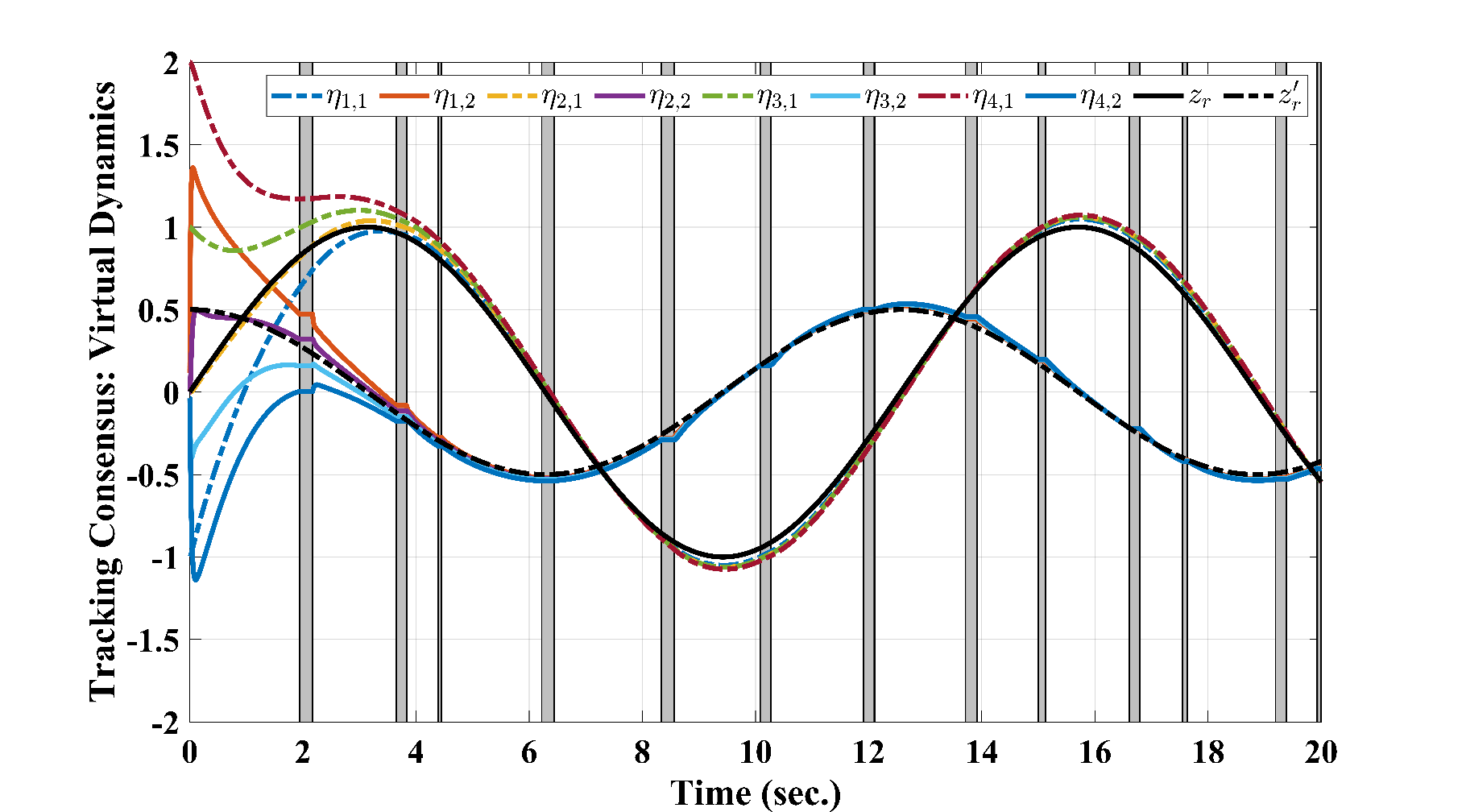} 
	\captionsetup{font=footnotesize}
	\caption[]{Consensus trajectory tracking: $ \eta_{i,q} $ with respect to $ z_{r} $ and $z_{r}^{\prime}$.  	\label{Tracking}}
\end{figure}

 \begin{figure}
	\centering
	\includegraphics[scale = 0.2]{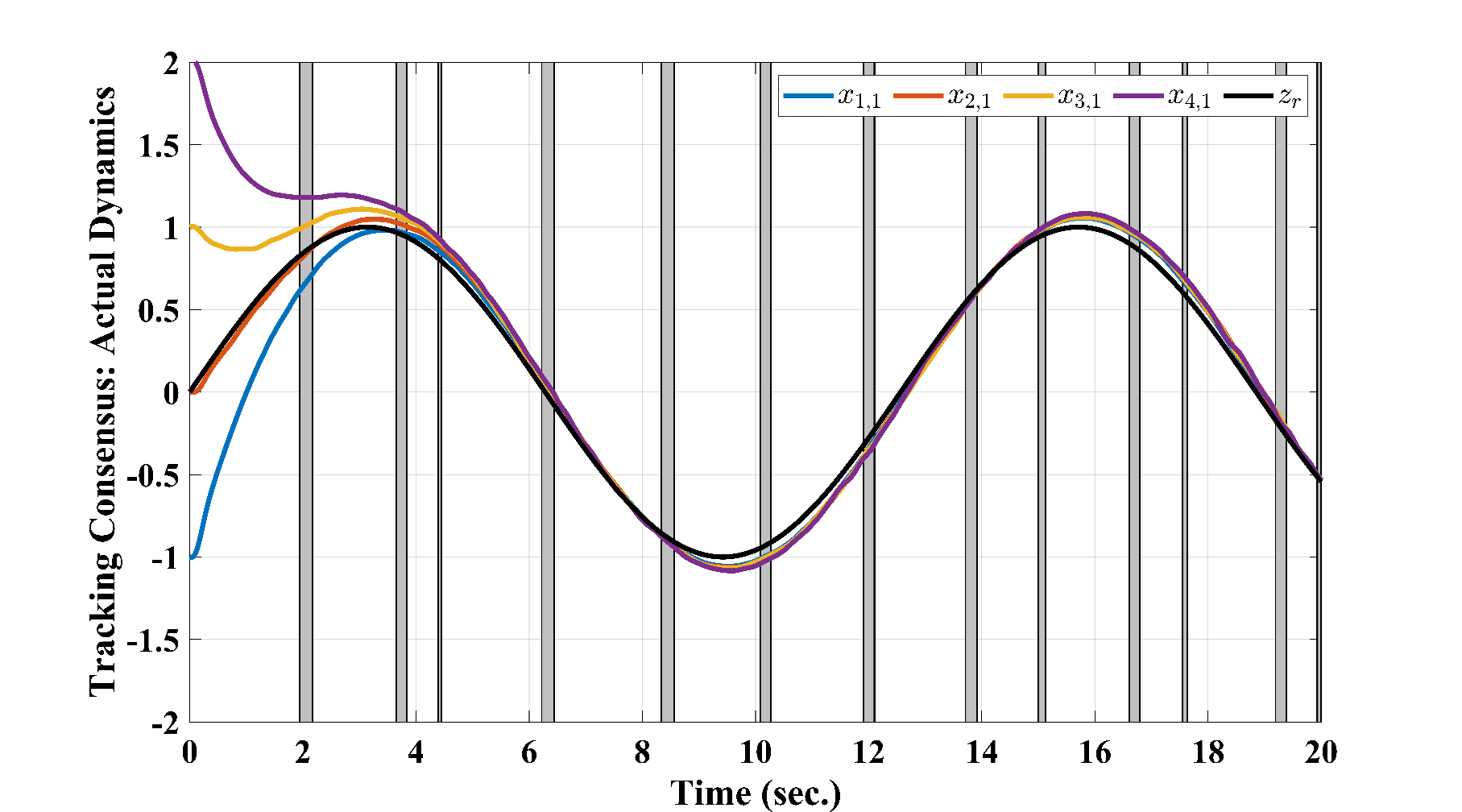}
	\captionsetup{font=footnotesize}
	\caption[]{Consensus trajectory tracking: $ x_{i,1} $ with respect to $ z_{r} $.  	\label{Tracking2}}
\end{figure}


\section{Simulation results}
\label{Sec. 4}
 In this section, the simulation results are reported subsequent to applying the proposed method in this paper to a network of $ N = 4 $ uncertain nonlinear second-order agents (i.e., $ n=2 $) defined in (\ref{dyn}). The agents are initially stationary positioned randomly in $ [-2,2] $. 
 \\
 The unknown smooth heterogeneous nonlinear functions are defined as 
 $ f_{i,1}(\bar{\bm{x}}_{i,1}) = 0.5ix_{i,1}\sin(x_{i,1})\cos(x_{i,1})$ and 
 $ f_{i,2}(\bar{\bm{x}}_{i,2}) = 0.9ix_{i,1}\sin(x_{i,2})\cos(0.3x_{i,1})$.
 The virtual leader signal is 
 $ z_{r} = \sin(0.5t)$
 and therefore $ c_{z} = 1 $. The random disturbances coefficient functions with $ m=1 $ are 
 $ {g}_{i,1}(\bar{\bm{x}}_{i,1}) = 0.5x_{i,1}\sin(x_{i,1})$ and 
 $ {g}_{i,2}(\bar{\bm{x}}_{i,2}) = 0.5x_{i,1}\sin(x_{i,1})\cos(x_{i,2}) $.
 In the simulation results, the random disturbances $ \xi_{i} $ are produced by 
 \begin{align} \label{randomdynamic}
 \dot{\xi}_{i}(t) = \frac{1}{ \mathcal{X}} [-\xi_{i}(t) + w_i(t)], 
 \end{align}
 in which $  w_i(t) $ is generated by band-limited white noise block with noise power $ A = 1 $ and correlation time $ t_c = 0.1 $.  The number of seed for $  w_i(t) $ is $ [23341] $, $ [34243] $, $ [23343] $, and $ [34241] $, respectively for $ i = 1,\ldots,4 $.
 \\
 The communication graph for the considered system is depicted in Fig. \ref{Graph}. From the graph topology, 
 $ \lambda_{\min}(\bm{\mathcal{L}}+\bm{\mathcal{B}}) = 0.1981 $.
 With $ c_{0} = 6$, it holds that 
 $ c_{0} \lambda_{\min}(\bm{\mathcal{L}}+\bm{\mathcal{B}}) \geq 1 $. A feasible solution to \eqref{Ricatti} gives 
 \begin{align*} 
 \bm{P} = \kern-0.3em \left[\begin{matrix} 
 22.9454  &  3.1623 \\
 3.1623   & 3.6280    
 \end{matrix} \right],
 \end{align*}
 with $ c_{1} = 20$, $ c_{2} = 10 $ and $ c_{3} = 3$. 
 The control gain vector is chosen as 
 $ \bm{K} = [18.9737 ~~ \thinspace  21.7679 ] $.
 Furthermore, $ \delta_{\alpha} = 0.4529 $ and $ \delta_{\beta} = 3.4 $. Additionally, from \eqref{Condi}, 
 $$ \max\{\mathscr{C}^{[\kappa]}\} \approx  (\mathscr{G}^{[\kappa]}) \times 13.2 \%. $$ 
 In the simulation results, we choose $ \mathscr{C}^{[\kappa]} = 0.9 \max\{\mathscr{C}^{[\kappa]}\}$ and the duration of a particular ON mode interval for communication network is randomly selected from $ [0.5,2] $. The other design parameters are also selected as $ \mathcal{K}_{i,q} = 15 $, $ \sigma_{i,q} = 0.5 $, and $ \gamma_{i,q} = 10 $ for $ q = 1, \ldots, n $. 
 \\
 The simulation is performed for a total time period of $ 20 $ seconds with the sampling time-step one milliseconds. 
The tracking consensus of virtual and real systems are respectively depicted in Fig. \ref{Tracking} and Fig. \ref{Tracking2}, in which the OFF mode intervals are highlighted in gray color. Otherwise, the system is performing the tasks in ON mode intervals. 
From the simulation experiments, it can be seen that the tracking performance is achieved while the communications among neighboring agents are intermittent. The effects of random disturbance are also compensated with the proposed approach in this paper.

	\section{Conclusion}
	\label{Sec. 5}
	An intermittent communication mechanism for the tracking consensus of high-order nonlinear MASs surrounded by random disturbances was addressed in this paper. The resiliency level to the failure of physical devices or unreliability of communication channels was analyzed by introducing a linear auxiliary trajectory of the system. The closed-loop networked system signals were proved to be NSpS-P. It has been justified that each agent follows the trajectory of the corresponding local auxiliary virtual system practically in probability.

	\bibliographystyle {ieeetr}
	\small\bibliography {Refrences}
	
\end{document}